\newcommand{\codefont}{\fontsize{9}{9}\selectfont}
\newcommand{\code}[1]{{\tt\codefont {#1}}}
\newcommand{\ifempty}[3]{%
  \ifthenelse{\isempty{#1}}{#2}{#3}%
}
\newcommand{\ifzero}[3]{%
  \ifthenelse{\equal{#1}{0}}{#2}{#3}%
}
\def\eg{e.g.\@\xspace}
\def\ie{i.e.\@\xspace}
\newtheorem{theorem}{Theorem}
\newtheorem{definition}{Definition}
\newtheorem{lemma}{Lemma}
\newtheorem{example}{Example}
\newtheorem{proposition}{Proposition}
\newcommand{\Realpos}{\mathbb{R}_{\geq 0}}
\newcommand{\Nat}{\mathbb{N}}
\newcommand{\reset}[2]{{#1}[{#2}]}
\newcommand{\past}[1]{\mathbin{\downarrow {#1}}}
\newcommand{\rdy}[1]{\code{rdy}\ifempty{#1}{}{({#1})}}
\def\atomColor{\color{Magenta}}
\newcommand{\atomFmt}[1]{{\atomColor{\code{#1}}}}
\newcommand{\bang}{{\textup{\texttt{\symbol{`\!}}}}}
\newcommand{\qmark}{{\textup{\texttt{\symbol{`\?}}}}}
\newcommand{\InTag}{\atomFmt{\qmark}}
\newcommand{\OutTag}{\atomFmt{\bang}}
\newcommand{\Act}{\mathord{\atomFmt{A}}}
\newcommand{\ActIn}{\mathord{\Act^{\InTag}}}
\newcommand{\ActOut}{\mathord{\Act^{\OutTag}}}
\newcommand{\BLab}{\atomFmt{L}}
\newcommand{\atom}[2][]{\atomFmt{\ifempty{#1}{{\code{#2}}}{{\code{#2}}}}_{\atomColor{#1}}}
\newcommand{\atomA}{\atom{a}}
\newcommand{\atomB}{\atom{b}}
\newcommand{\atomIn}[2][]{{\atomFmt{\InTag}}\atom[#1]{#2}}
\newcommand{\atomOut}[2][]{{\atomFmt{\OutTag}}\atom[#1]{#2}}
\newcommand{\labL}[1][]{\atomFmt{\atomFmt{\ell}_{#1}}}
\newcommand{\labLi}[1][]{\atomFmt{\atomFmt{\ell}'_{#1}}}
\def\tsbColor{\color{Magenta}}
\newcommand{\tsbFmt}[1]{{\tsbColor{#1}}}
\newcommand{\tsbP}[1][]{\mathord{\tsbFmt{p}_{\tsbColor{#1}}}}
\newcommand{\tsbPi}[1][]{\mathord{\tsbP[#1]\tsbColor{'}}}
\newcommand{\tsbPii}[1][]{\mathord{\tsbP[#1]\tsbColor{''}}}
\newcommand{\tsbQ}[1][]{\mathord{\tsbFmt{q}_{\tsbColor{#1}}}}
\newcommand{\tsbQi}[1][]{\mathord{\tsbQ[#1]\tsbColor{'}}}
\newcommand{\tsbQii}[1][]{\mathord{\tsbQ[#1]\tsbColor{''}}}
\newcommand{\tsbQiii}[1][]{\mathord{\tsbQ[#1]\tsbColor{'''}}}
\newcommand{\tsbX}[1][]{\mathord{\tsbFmt{X}_{\tsbColor{#1}}}}
\newcommand{\tsbZ}[1][]{\mathord{\tsbFmt{Z}}}
\newcommand{\tsbW}[1][]{\mathord{\tsbFmt{W}}}
\newcommand{\recsym}{{\tsbColor{\operatorname{rec}}}}
\newcommand{\rec}[2]{\recsym\,{#1}.\,{#2}}
\newcommand{\success}{{\tsbColor{\mathbf{1}}}}
\def\clockColor{\color{Cerulean}}
\def\guardColor{\color{Cerulean}}
\def\resetColor{\color{Cerulean}}
\newcommand{\clockFmt}[1]{{\clockColor{#1}}}
\newcommand{\guardFmt}[1]{{\guardColor{#1}}}
\newcommand{\resetFmt}[1]{{\resetColor{#1}}}
\newcommand{\Clocks}{\mathbb{\clockColor{C}}}
\newcommand{\Val}{{\clockColor{\mathbb{V}}}}
\newcommand{\clockT}[1][]{\mathord{\clockFmt{t}_{\clockColor{#1}}}}
\newcommand{\clockTi}[1][]{{\clockT[#1]{\clockColor{'}}}}
\newcommand{\clockX}[1][]{\mathord{\clockFmt{x}_{\clockColor{#1}}}}
\newcommand{\clockN}[1][]{\mathord{\clockFmt{\nu}_{\clockColor{#1}}}}
\newcommand{\clockNi}[1][]{{\clockN[#1]{\clockColor{'}}}}
\newcommand{\clockNii}[1][]{{\clockN[#1]{\clockColor{''}}}}
\newcommand{\clockE}[1][]{\mathord{\clockFmt{\eta}_{\clockColor{#1}}}}
\newcommand{\clockEi}[1][]{{\clockE[#1]{\clockColor{'}}}}
\newcommand{\clockEii}[1][]{{\clockE[#1]{\clockColor{''}}}}
\newcommand{\clockEiii}[1][]{{\clockE[#1]{\clockColor{'''}}}}
\newcommand{\guardTrue}{\mathord{\guardFmt{\code{true}}}}
\newcommand{\guardG}[1][]{\mathord{\guardFmt{g}_{\guardColor{#1}}}}
\newcommand{\guardGi}[1][]{{\guardG[#1]{\guardColor{'}}}}
\newcommand{\resetR}[1][]{\mathord{\resetFmt{R}_{\resetColor{#1}}}}
\newcommand{\resetT}[1][]{\mathord{\resetFmt{T}_{\resetColor{#1}}}}
\newcommand{\sep}{\,\bnfmid\,}
\newcommand{\compile}[2]{\ifthenelse{\equal{#1}{yes}}{#2}{}}
\newcommand{\irule}[2]{
  \begin{array}{c}
    #1  \\ \hline
    #2
  \end{array}}
\newcommand{\bnfmid}{\;\big|\;}
\newcommand{\nrule}[1]{{\scriptsize \textsc{#1}}}
\newcommand{\sem}[1]{\mbox{\ensuremath{\llbracket{{#1}}\rrbracket}}}
\newcommand{\bind}[2]{\nicefrac{#2}{#1}}
\newcommand{\setenum}[1]{\{#1\}}
\newcommand{\setcomp}[2]{\left\{{#1} \,\mid\, {#2}\right\}}
\newcommand{\SumIntRaw}{\bigoplus}
\newcommand{\SumExtRaw}{\sum}
\newcommand{\sumInt}{\oplus}
\newcommand{\TSumInt}[4][]{\SumIntRaw_{#1} {#2}{\ifempty{#3}{}{\setenum{#3}}} \, . \, {#4}}
\newcommand{\TSumExt}[4][]{\SumExtRaw_{#1} {#2}{\ifempty{#3}{}{\setenum{#3}}} \, . \, {#4}}
\newcommand{\TsumI}[3]{{#1} \ifempty{#2}{}{\setenum{#2}} \ifempty{#3}{}{ . \, {#3}}}
\newcommand{\TsumE}[3]{\TsumI{#1}{#2}{#3}}
\newcommand{\compliant}[0]{\mathbin{\bowtie_s}}
\newcommand{\acompliant}[0]{\mathbin{\bowtie_a}}
\newcommand{\rcompliant}[0]{\overline{\mathbin{\bowtie}}}
\newcommand{\smove}[1]{\mathrel{\xrightarrow{#1}_{\mathrm{s}}}} 
\newcommand{\asmove}[1]{\mathrel{\xrightarrow{#1}_{\mathrm{a}}}} 
\newcommand{\kindK}[1][]{\clockFmt{\mathcal{K}}_{\clockColor{#1}}}
\newcommand{\kindKi}[1][]{\clockFmt{\mathcal{K}'}_{\clockColor{#1}}}
\newcommand{\hcn}[1][]{\def\optdub{#1}%
\ifx\optdub\empty\Phi\else\Phi({#1})\fi}
\newcommand{\recLevel}[1][]{\def\optdub{#1}%
\ifx\optdub\empty\mathrm{RL}\else\mathrm{RL}({#1})\fi}
\newcommand{\mqr}{\rho}
\newcommand{\mqri}{\rho'}
\newcommand{\mqrii}{\rho''}
\newcommand{\mqs}{\sigma}
\newcommand{\mqsi}{\sigma'}
\newcommand{\mytitle}{On Urgency in Asynchronous Timed Session Types}
\title{\mytitle\footnote{This work has been partially sponsored by EPSRC
EP/N035372/1.}
}
\author{Maurizio Murgia
\institute{School of Computing\\
University of Kent.}\\
Canterbury, Uk\\
\email{M.Murgia@kent.ac.uk}}
\begin{document}
\maketitle


\begin{abstract}
  We study an urgent semantics of asynchronous timed session types,
  where input actions happen as soon as possible. We show
  that with this semantics we can recover to the timed setting
  an appealing property of untimed session types: 
  namely, deadlock-freedom is preserved when passing from synchronous to asynchronous
  communication.
\end{abstract}

\section{Introduction} \label{sec:introduction}

Session types are abstractions of communication protocols~\cite{TakeuchiHK94},
used to statically or dynamically check that distributed programs interact correctly.
The original binary synchronous theory has subsequently been extended in 
several directions: explicit support for multiparty protocols and 
choreographies \cite{Honda16jacm}, asynchronous communication through FIFO buffers 
\cite{Honda16jacm}, 
time \cite{Bocchi14concur,BartolettiCM17}, and others \cite{Dezani09wsfm}. 

In this paper, we start an investigation on the relationships between synchronous 
and asynchronous session types in the timed binary setting. A related study has been performed
in the untimed setting \cite{BSZ14concur}, where, among other things, 
it has been proved that deadlock freedom in
session types interacting with synchronous communication is
preserved if messages are buffered. As reasoning about
synchronous systems is easier (synchronous progress is decidibile, 
asynchronous one is not), concurrent applications 
can be designed and verified
with synchronous communication in mind, and then run on top of real-world 
asynchronous mediums (\eg, TCP) while
preserving correctness. We refer to this practice as 
\emph{design synchronous/deploy asynchronous} methodology.

In the timed setting, as noted in \cite{BartolettiCM17}, this property
is lost, at least with the asynchronous semantics of \cite{Bocchi14concur}.
In this work, we propose an alternative semantics of asynchronous session types,
that forbids delays when reading actions are possible, similar to
an urgent semantics of Communicating Timed Automata \cite{ctaRefinement}.  
As noted in \cite{ctaRefinement}, this semantics, 
that we call \emph{input urgent} asynchronous semantics, better captures   
common reading primitives of programming languages/APIs, 
that return as soon as a message is available. Our semantics makes therefore
session types abstract models of \emph{programs}.
The semantics of \cite{Bocchi14concur}, instead, being more general (allows more behaviour),
seems preferable for modelling \emph{protocols} at a higher level of abstraction.

The main contribution of this paper is that the preservation result of \cite{BSZ14concur}
can be lifted to the timed setting, when using input urgent semantics. 
As timed synchronous progress is decidable \cite{BartolettiCM17}, and, as discussed above, 
input urgent semantics models programs, our result 
paves the way for the application of the design synchronous/deploy asynchronous 
methodology to time-sensitive distributed software.

\section{Synchronous timed session types} \label{sec:tst}
We now introduce timed session types (TST), 
their synchronous semantics and the associated notion of progress.
The material of this section is taken from \cite{BartolettiCM17}, with minor variations. 
The clock based model of time is borrowed from Timed Automata \cite{Alur94theory}.

\paragraph{Preliminaries.}

Let $\Act$ be a set of \emph{actions},
ranged over by $\atomA, \atomB, \ldots$.
We denote with $\ActOut$ the set $\setcomp{\atomOut{a}}{\atomA \in \Act}$
of \emph{output actions},
with $\ActIn$ the set $\setcomp{\atomIn{a}}{\atomA \in \Act}$
of \emph{input actions},
and with $\BLab = \ActOut \cup \ActIn$ the set of \emph{branch labels},
ranged over by $\labL,\labLi,\ldots$.
\newcommand{\guardsyntax}{%
  \guardG 
  \; ::= \;
  \guardTrue 
  \ \sep \ 
  \neg \guardG 
  \ \sep \ 
  \guardG  \land \guardG  
  \ \sep \
  \clockT \circ d
  \ \sep \
  \clockT - \clockTi \circ d
}
We use $\delta, \delta', \ldots$ to range over the set $\Realpos$
of not-negative real numbers,
and $d,d',\ldots$ to range over 
the set of natural numbers $\Nat$.
Let $\Clocks$ be a set of
\emph{clocks}, variables in $\Realpos$, ranged over by
$\clockT, \clockTi, \ldots$.  We use $\resetR,\resetT,\ldots \subseteq
\Clocks$ to range over sets of clocks. The syntax of guards 
(ranged over by $\guardG,\guardGi,\hdots$) is:
\[\guardsyntax \tag*{where $\circ \in \setenum{<,\leq,=,\geq,>}$}.\]
We give meaning to guards in terms of \emph{clock valuations},
namely functions of type $\Clocks \rightarrow \Realpos$
which associate each clock with its value.
We denote with $\Val = \Clocks \rightarrow \Realpos$ 
the set of clock valuations 
(ranged over by $\clockN, \clockE, \ldots$), %
and with $\clockN[0]$ the valuation mapping each clock to zero.
%
We use $\kindK,\kindKi,\ldots$ to range over sets of clock valuations.
We write $\clockN + \delta$ for the valuation 
which increases $\clockN$ by $\delta$,
\ie, $(\clockN + \delta)(\clockT) = \clockN(\clockT) + \delta$.
For a set $\resetR \subseteq \Clocks$,
we write $\reset{\clockN}{\resetR}$ for the \emph{reset} 
of the clocks in $\resetR$, 
\ie,
\[
\reset{\clockN}{\resetR}(\clockT) = \begin{cases}
  0 & \text{if $\clockT \in \resetR$} \\
  \clockN(\clockT) & \text{otherwise}
\end{cases}
\]

\begin{definition}[\textbf{Semantics of guards}]
  Let $\guardG$ be a guard.
  We define the set of clock valuations $\sem{\guardG}$
  inductively as follows, where $\circ \in \setenum{<,\leq,=,\geq,>}$:
  \[
  \begin{array}{lcl}
    \sem{\guardTrue} = \Val
    &
    \sem{\neg\guardG} = \Val \setminus \sem{\guardG} 
    \qquad
    &
    \sem{\guardG[1] \land \guardG[2]} = \sem{\guardG[1]} \cap \sem{\guardG[2]} 
    \\[5pt]
    \sem{\clockT \circ d} = \setcomp{\clockN}{\clockN(\clockT) \circ d} 
    \quad
    & & 
    \sem{\clockT - \clockTi \circ d} = \setcomp{\clockN}{\clockN(\clockT) - \clockN(\clockTi) \circ d} 
  \end{array}
  \]
\end{definition}

\begin{definition}[\textbf{Past}]
Let $\kindK$ be a set of clock valuations. We define $\past{\kindK}$ (the past of $\kindK$)
as follows:
\[
\past{\kindK} = \setcomp{\clockN}{\exists \delta \geq 0 : \clockN+\delta \in \kindK}
\]
\end{definition}
To model messages in transit we use queues. Queues are terms of the following grammar:
\[
\mqr\; :: =  \; \emptyset \ \mid \  \atomA;\mqr \\
\]  
We use $\mqr,\mqs$ to range over queues and we omit trailing occurrences of $\emptyset$. 
We write $|\mqr|$ for the number of messages in the queue $\mqr$ 
(we omit the straigthforward definition).

\paragraph{Syntax.}

A TST $\tsbP$ models the behaviour of a single participant
involved in an interaction. Roughly, in an internal choice 
$\TSumInt[i]{\atomOut[i]{a}}{\guardG[i],\resetR[i]}{\tsbP[i]}$ 
a participant has to perform one of the outputs $\atomOut[i]{a}$ in a time window 
where $\guardG[i]$ is true.
%
Conversely, in an external choice
$\TSumExt[i]{\atomIn[i]{a}}{\guardG[i],\resetR[i]}{\tsbQ[i]}$
the participant is available to receive
each message $\atom[i]{a}$ in \emph{any instant} 
within the time window defined by $\guardG[i]$.

\begin{definition}
  \label{def:tst:syntax}
  \emph{Timed session types} $\tsbP,\tsbQ,\ldots$ 
  are terms of the following grammar:
  \begin{align*}
    \tsbP \;\; & ::= \;\;
    \success
    \ \sep \
    \TSumInt[i \in I]{\atomOut[i]{a}}{\guardG[i],\resetR[i]}{\tsbP[i]} 
    \ \sep  \
    \TSumExt[i \in I]{\atomIn[i]{a}}{\guardG[i],\resetR[i]}{\tsbP[i]}
    \ \sep \
    \rec \tsbX \tsbP
    \ \sep \
    \tsbX
  \end{align*}
  where   
  \begin{inparaenum}[(i)]
  \item $I \neq \emptyset$ and finite,
  \item actions in internal/external choices are pairwise distinct,
  \item recursion is guarded.
  \end{inparaenum}
  %
  We omit true guards, empty resets, 
  and trailing occurrences of $\success$.
\end{definition}

\paragraph{Synchronous semantics.}
Semantics of TSTs is given in terms of a timed labelled transition relation between
configurations (defined below). Labels (ranged over by $\alpha,\alpha',\hdots$) 
are either silent actions $\tau$, delays $\delta > 0$, or branch labels. Labels
$\delta$ model elapse of time. Branch labels and $\tau$ model discrete actions,
and take no time.

\begin{definition}{\bf(Configurations)}\label{def:configuration}
A \emph{configuration} is a term of the form 
$(\tsbP,\mqr,\clockN) \mid (\tsbQ,\mqs,\clockE)$.
A configuration is synchronous if $|\mqr|,|\mqs| \leq 1$.
\end{definition}

We are now ready to define the synchronous semantics of TSTs. 
Unlike \cite{BartolettiCM17}, that uses committed choices,
we use queues in synchronous configurations to simplify
the comparison of synchronous and asynchronous semantics.
The two semantics are equivalent.

\begin{definition}{\bf(Synchronous semantics of TSTs)} 
\label{def:tst:semantics}\label{def:rdy}  
  The semantics of TSTs is defined as the smallest labelled relation 
  between synchronous configurations closed under the
  rules in \Cref{fig:tst:s_semantics}. 
  As usual, we denote with $\smove{}^*$ the reflexive and transitive
  closure of the relation $\smove{}$.
\end{definition}

\begin{figure}[t]
  \[
  \begin{array}{c}
    \begin{array}{cll}
      {( {\TsumI{\atomOut{a}}{\guardG,\resetR}{\tsbP}} 
	\sumInt \tsbPi,\;\emptyset,\; \clockN)
        \;\smove{\tau}\;
        (\tsbP,\; \atomA,\;\reset{\clockN}{\resetR})
      }
      \hspace{20pt}
      & \text{if } \clockN \in \sem{\guardG} 
      & \nrule{[$\sumInt$]}
      \\[4pt]
      {(\tsbP,\; \atomA,\;\clockN)
        \;\smove{\atomOut{a}}\;
        (\tsbP,\; \emptyset,\;\clockN)
      }
      & 
      & \nrule{[\bang]}    
      \\[4pt]
      {(\TsumE{\atomIn{a}}{\guardG,\resetR}{\tsbP} + \tsbPi, \;\emptyset ,\;\clockN)
        \;\smove{\atomIn{a}}\;
        (\tsbP, \; \emptyset,\;\reset{\clockN}{\resetR})}
      & \text{if } \clockN \in \sem{\guardG} 
      & \nrule{[\qmark]}    
      \\[4pt]
      (\tsbP,\; \emptyset,\;\clockN)\smove{\; \delta \; }
	(\tsbP,\;\emptyset,\; \clockN+\delta)
      & \text{if } \clockN + \delta \in \rdy{\tsbP}
      & \nrule{[Del]} \\[4pt]
	\irule{(\tsbP\setenum{\bind{\tsbX}{\rec \tsbX \tsbP}},\;\mqr,\;\clockN) 
	\;\smove{\alpha}\; (\tsbPi,\;\mqri,\;\clockNi)}
	{(\rec \tsbX \tsbP,\;\mqr,\;\clockN) \;\smove{\alpha}\;
	(\tsbPi,\;\mqri,\;\clockNi)} 
      & & \nrule{[Rec]}  
    \end{array}   \\[36pt]
   \begin{array}{cl}
    \irule{(\tsbP,\mqr,\clockN) \smove{\; \tau \; } (\tsbPi,\mqri,\clockNi)}
    {(\tsbP,\mqr,\clockN) \mid (\tsbQ,\mqs,\clockE)\smove {\; \tau \;}
    (\tsbPi,\mqri,\clockNi) \mid (\tsbQ,\mqs,\clockE)}
    & \nrule{[S-$\sumInt$]}
    \\[13pt]
    \irule
    {(\tsbP,\mqr,\clockN) \smove{\; \delta \; } (\tsbP,\mqr,\clockNi) \quad 
      (\tsbQ,\mqs,\clockE) \smove{\; \delta \; } (\tsbQ,\mqs,\clockEi)}
    {(\tsbP,\mqr,\clockN) \mid (\tsbQ,\mqs,\clockE) \smove{\; \delta \;} 
      (\tsbP,\mqr,\clockNi) \mid (\tsbQ,\mqs,\clockEi)}
    & \nrule{[S-Del]}
    \\[13pt]
    \irule
    {(\tsbP,\mqr,\clockN) \smove{\; \atomOut{a} \; } (\tsbPi,\mqri,\clockNi) \quad 
      (\tsbQ,\mqs,\clockE) \smove{\; \atomIn{a} \; } (\tsbQi,\mqsi,\clockEi)}
    {(\tsbP,\mqr,\clockN) \mid (\tsbQ,\mqs,\clockE) \smove{\; \tau \;} 
      (\tsbPi,\mqri,\clockNi) \mid (\tsbQi,\mqsi,\clockEi)}
    &  \nrule{[S-$\tau$]}\\[13pt]
   \end{array}
    \\[36pt]
    \rdy{\TSumInt{\atomOut[i]{a}}{\guardG[i],\resetR[i]}{\tsbP[i]}} = 
    \past{\bigcup \sem{\guardG[i]}}
    \hspace{12pt}
    \rdy{\SumExtRaw \cdots} = \rdy{\success} = \Val\\[4pt]
    \hspace{12pt}
    \rdy{\rec \tsbX \tsbP} = \rdy{\tsbP\setenum{\bind{\tsbX}{\rec \tsbX \tsbP}}}
  \end{array}
  \]
  \caption{Semantics of synchronous timed session types (symmetric rules omitted).}
  \label{fig:tst:s_semantics}
\end{figure}

Rule~\nrule{[$\sumInt$]} allows to commit to the branch
$\atomA$ of an internal choice, when the corresponding
guard is satisfied in the clock valuation $\clockN$ and the queue is empty.
This results in the configuration
$(\tsbP,\; \atomA,\;\clockN)$ which 
can only fire $\atomOut{a}$ (\nrule{[\bang]}).
Rule~\nrule{[\qmark]} allows an external choice to fire any of its
enabled input actions. Note that the queue must be empty. This is
what makes the semantics synchronous. Indeed, without the emptyness
requirement, we would have obtained a 1-bounded asynchronous semantics.
Rule~\nrule{[Del]} allows time to pass;
this is always possible for external choices and success term $\success$, 
while for an internal choice we require, through the function $\rdy{}$,
that some guard remains satisfiable (now or in the future). 
Note that also here we require empty queues:
this guarantees that messages are read at the same time of writing, \ie communication
is synchronous.
The other rules are almost standard.

\begin{example}
  \label[example]{ex:tst:committed-choice}
  Let 
  \(
  \tsbP = 
  \TsumI{\atomOut{a}}{}{} \sumInt 
  \TsumI{\atomOut{b}}{\clockT \geq 2}{}
  \) and
  \(
  \tsbQ = 
  \TsumE{\atomIn{b}}{\clockT \geq 5}{}
  \). $\tsbP$ internally chooses whether to send $\atomA$ at any time or 
  $\atomB$ after a delay of at least 2 time units. $\tsbQ$ instead waits for a $\atomB$ message
  after a delay of 5 time units. 
  Three possible reductions of the composition of $\tsbP$ with $\tsbQ$ are the following:
  \begin{align}
    \nonumber
    (\tsbP,\emptyset,\clockN[0]) \mid (\tsbQ,\emptyset,\clockE[0]) 
    \; \smove{\; 7 \;} \smove{\; \tau \;} \;
    & (\success,\atomB,\clockN[0] + 7) \mid 
    (\tsbQ,\emptyset,\clockE[0] + 7)
    \\
    \label{eq:tst:committed-choice:0}
    \; \smove{\; \tau \;} \;
    & (\success,\emptyset,\clockN[0] + 7) \mid 
    (\success,\emptyset,\clockE[0] + 7)
    \\
    \label{eq:tst:committed-choice:1}
    (\tsbP,\emptyset,\clockN[0]) \mid (\tsbQ,\emptyset,\clockE[0]) 
    \; \smove{\; \delta \;} \smove{\; \tau \;} \;
    & (\success,\atomA,\clockN[0] + \delta) \mid 
    (\tsbQ,\emptyset,\clockE[0] + \delta)
    \\
    \label{eq:tst:committed-choice:2}
    (\tsbP,\emptyset,\clockN[0]) \mid (\tsbQ,\emptyset,\clockE[0])  
    \; \smove{\; 3 \;} \smove{\; \tau \;} \;
    & (\success,\atomB,\clockN[0] + 3) \mid 
    (\tsbQ,\emptyset,\clockE[0] + 3)
  \end{align}
  The computation in~\eqref{eq:tst:committed-choice:0} reaches success.
  In~\eqref{eq:tst:committed-choice:1},
  $\tsbP$ commits to the choice $\atomOut{a}$ after some delay $\delta$;
  at this point, time cannot pass,
  and no synchronisation is possible.
  In~\eqref{eq:tst:committed-choice:2},
  $\tsbP$ commits to $\atomOut{b}$ after $3$ time units;
  here, the rightmost endpoint would offer $\atomIn{b}$,
  --- but not in the time chosen by the leftmost endpoint.
\end{example}

\paragraph{Synchronous progress.} \label{sec:tst-compliance}
We recall the progress based notion of compliance of \cite{BartolettiCM17},
that we refer here as synchronous compliance.
TSTs $\tsbP$ and $\tsbQ$ are synchronous compliant when their composition never reaches
a deadlock state. 

\begin{definition}[Synchronous compliance] 
  \label{def:deadlock}
  \label{def:compliance}
  \label{def:success}
  We say that $(\tsbP,\mqr,\clockN) \mid (\tsbQ,\mqs,\clockE)$ is
  \emph{success} whenever $\tsbP = \success = \tsbQ$ and $\mqr = \emptyset = \mqs$.
  We say that $(\tsbP,\mqr,\clockN) \mid (\tsbQ,\mqs,\clockE)$ is \emph{s-stuck}
  whenever $(\tsbP,\mqr,\clockN) \mid (\tsbQ,\mqs,\clockE) \not\smove{\tau}$ and
  there is no $\delta$ such that $(\tsbP,\mqr,\clockN) \mid (\tsbQ,\mqs,\clockE) 
  \smove{\delta}\smove{\tau}$.
  We say that $(\tsbP,\mqr,\clockN) \mid (\tsbQ,\mqs,\clockE)$
  is \emph{s-deadlock} whenever
  $(i)$ $(\tsbP,\mqr,\clockN) \mid (\tsbQ,\mqs,\clockE)$ not success,
  and $(ii)$ $(\tsbP,\mqr,\clockN) \mid (\tsbQ,\mqs,\clockE)$ is s-stuck.
  We then write $(\tsbP,\clockN) \compliant (\tsbQ,\clockE)$
  whenever:
  \[
  (\tsbP,\emptyset,\clockN) \mid (\tsbQ,\emptyset,\clockE)
  \smove{}^* 
  (\tsbPi,\mqr,\clockNi) \mid (\tsbQi,\mqs,\clockEi)
  \quad \text{ implies } \quad 
  (\tsbPi,\mqr,\clockNi) \mid (\tsbQi,\mqs,\clockEi)
  \text{ not s-deadlock}
  \]
  We say that $\tsbP$ and $\tsbQ$ are \emph{synchronous compliant}
  whenever $(\tsbP,\clockN[0]) \compliant (\tsbQ,\clockE[0])$
  (in short, $\tsbP \compliant \tsbQ$).
\end{definition}

\begin{example}
  Let $\tsbP = \atomIn{a}\setenum{\clockT \leq 3} . \atomOut{b}\setenum{\clockT \leq 3}$.
  We have that $\tsbP$ is compliant with
  $\tsbQ = \atomOut{a}\setenum{\clockT \leq 2}.\atomIn{b}\setenum{\clockT \leq 3}$, 
  but it is not compliant with 
  $\tsbQi = \atomOut{a}\setenum{\clockT \leq 4}.\atomIn{b}\setenum{\clockT \leq 4}$.
\end{example}



\section{Input urgent asynchronous timed session types} \label{sec:a-tst}
In this section we introduce input urgent asynchronous semantics,
the associated notion of progress, and we show some relationships
with the synchronous semantics.

\paragraph{Input urgent asynchronous semantics.}
We now introduce the input urgent semantics of TSTs. Note that here we use
configurations (\Cref{def:configuration}), \ie queues can be unbounded.

\begin{definition}{\bf(Input urgent asynchronous semantics of TSTs)} 
\label{def:tst:a-semantics}  
  The input urgent asynchronous semantics of TSTs is defined as the smallest 
  labelled relation between
  configuration closed under the
  rules in \Cref{fig:tst:as_semantics}.
  As usual, we denote with $\asmove{}^*$ the reflexive and transitive
  closure of the relation $\asmove{}$.
\end{definition}

Rule~\nrule{[$\sumInt$]} allows to append the message
$\atomA$ to the queue, when the corresponding
guard is satisfied in the clock valuation $\clockN$.
Rule \nrule{[\bang]} just says that the message in the head of the queue
can be consumed by the communication partner.
Rule~\nrule{[\qmark]} allows an external choice to fire any of its
enabled input actions. 
Rule~\nrule{[Del]} allows time to pass;
this is always possible for external choices and success term, 
while for an internal choice we require, through the function $\rdy{}$,
that some guard remains satisfiable. 
Rule~\nrule{[S-Del]} allows time to pass for composite systems.
While the first two premises are standard, the third one is what
makes the semantics urgent: we require, through the predicate $\delta$-sync,
that elapsing of time does not prevent nor delay any possible communication.
The other rules are almost standard.

\begin{figure}[t]
  \[
  \begin{array}{c}
    \begin{array}{cll}
      {( {\TsumI{\atomOut{a}}{\guardG,\resetR}{\tsbP}} 
	\sumInt \tsbPi,\;\mqr,\; \clockN)
        \;\asmove{\tau}\;
        (\tsbP,\; \mqr;\atomA,\;\reset{\clockN}{\resetR})
      }
      \hspace{20pt}
      & \text{if } \clockN \in \sem{\guardG} 
      & \nrule{[$\sumInt$]}
      \\[4pt]
      {(\tsbP,\; \atomA;\mqr,\;\clockN)
        \;\asmove{\atomOut{a}}\;
        (\tsbP,\; \mqr,\;\clockN)
      }
      & 
      & \nrule{[\bang]}    
      \\[4pt]
      {(\TsumE{\atomIn{a}}{\guardG,\resetR}{\tsbP} + \tsbPi, \;\mqr ,\;\clockN)
        \;\asmove{\atomIn{a}}\;
        (\tsbP, \; \mqr,\;\reset{\clockN}{\resetR})}
      & \text{if } \clockN \in \sem{\guardG} 
      & \nrule{[\qmark]}    
      \\[4pt]
      (\tsbP,\; \mqr,\;\clockN)\asmove{\; \delta \; }
	(\tsbP,\;\mqr,\; \clockN+\delta)
      & \text{if } \clockN + \delta \in \rdy{\tsbP}
      & \nrule{[Del]} \\[4pt]
	\irule{(\tsbP\setenum{\bind{\tsbX}{\tsbP}},\;\mqr,\;\clockN) \;\asmove{\alpha}\;
	(\tsbPi,\;\mqri,\;\clockNi)}
	{(\rec \tsbX \tsbP,\;\mqr,\;\clockN) \;\asmove{\alpha}\;
	(\tsbPi,\;\mqri,\;\clockNi)} 
      & & \nrule{[Rec]}  
    \end{array}   \\[40pt]
   \begin{array}{cl}
    \irule{(\tsbP,\mqr,\clockN) \asmove{\; \tau \; } (\tsbPi,\mqri,\clockNi)}
    {(\tsbP,\mqr,\clockN) \mid (\tsbQ,\mqs,\clockE)\asmove {\; \tau \;}
    (\tsbPi,\mqri,\clockNi) \mid (\tsbQ,\mqs,\clockE)}
    & \nrule{[S-$\sumInt$]}
    \\[13pt]
    \irule
    {
    \begin{array}{c}
      (\tsbP,\mqr,\clockN) \asmove{\; \delta \; } (\tsbP,\mqr,\clockNi) \quad 
      (\tsbQ,\mqs,\clockE) \asmove{\; \delta \; } (\tsbQ,\mqs,\clockEi)\\
	\forall \delta' < \delta: (\tsbP,\mqr,\clockN) \mid (\tsbQ,\mqs,\clockE)
	\text{ not }\delta'-\text{sync}
    \end{array}}
    {(\tsbP,\mqr,\clockN) \mid (\tsbQ,\mqs,\clockE) \asmove{\; \delta \;} 
      (\tsbP,\mqr,\clockNi) \mid (\tsbQ,\mqs,\clockEi)}
    & \nrule{[S-Del]}
    \\[18pt]
    \irule
    {(\tsbP,\mqr,\clockN) \asmove{\; \atomOut{a} \; } (\tsbPi,\mqri,\clockNi) \quad 
      (\tsbQ,\mqs,\clockE) \asmove{\; \atomIn{a} \; } (\tsbQi,\mqsi,\clockEi)}
    {(\tsbP,\mqr,\clockN) \mid (\tsbQ,\mqs,\clockE) \asmove{\; \tau \;} 
      (\tsbPi,\mqri,\clockNi) \mid (\tsbQi,\mqsi,\clockEi)}
    &  \nrule{[S-$\tau$]}
   \end{array}\\[42pt]
   (\tsbP,\mqr,\clockN) \mid (\tsbQ,\mqs,\clockE)\; \delta-\text{sync} \Longleftrightarrow
   \exists \atomA: \begin{cases}
 	(\tsbP,\mqr,\clockN + \delta) \asmove{\atomOut{a}} \land 
	(\tsbQ,\mqs,\clockE + \delta) \asmove{\atomIn{a}} & \text{or}\\
	(\tsbP,\mqr,\clockN + \delta) \asmove{\atomIn{a}} \land 
	(\tsbQ,\mqs,\clockE + \delta) \asmove{\atomOut{a}}
	\end{cases}
  \end{array}
  \]
  \caption{Urgent semantics of asynchronous timed session types 
  (symmetric rules omitted).}
  \label{fig:tst:as_semantics}
\end{figure}

\begin{example}\label{ex:as-semantics}
 Let 
  \(
  \tsbP = 
  \TsumI{\atomOut{a}}{\clockT \leq 2}{\TsumI{\atomOut{b}}{\clockT \leq 3}{}} 
  \),
  \(
  \tsbQ = 
  \TsumE{\atomIn{a}}{\clockT \geq 4}{\TsumE{\atomIn{b}}{\clockT \geq 5}{}}
  \),
  a possible execution of the system is:
\[
 (\tsbP,\emptyset,\clockN[0]) \mid (\tsbQ,\emptyset,\clockE[0]) \asmove{\tau}
 \asmove{\tau}(\success,\atomA;\atomB,\clockN[0]) \mid (\tsbQ,\emptyset,\clockE[0])
\asmove{4}\asmove{\tau}\asmove{1}\asmove{\tau}
\]
Where the for $\tau$ actions represent, respectively, an output of $\atomA$,
an output of $\atomB$, an input of $\atomA$, and an input of $\atomB$.
Note that urgency prevents transitions 
$(\success,\atomA;\atomB,\clockN[0]) \mid (\tsbQ,\emptyset,\clockE[0])
\asmove{\delta}$ if $\delta > 4$.
Let $\tsbQi = \TsumE{\atomIn{a}}{\clockT > 4}{\TsumE{\atomIn{b}}{\clockT \geq 5}{}}$,
i.e. $\tsbQi$ is like $\tsbQ$ but the constraint $\clockT \geq 4$ is substituted with
$\clockT > 4$. Message $\atomA$ cannot be consumed anymore: 
\[
 (\tsbP,\emptyset,\clockN[0]) \mid (\tsbQ,\emptyset,\clockE[0]) \asmove{\tau}
 \asmove{\tau}\asmove{4}
(\success,\atomA;\atomB,\clockN[0] + 4) \mid (\tsbQ,\emptyset,\clockE[0] + 4)
\]
Configuration $(\success,\atomA;\atomB,\clockN[0] + 4) \mid (\tsbQ,\emptyset,\clockE[0] + 4)$
cannot read $\atomA$ (because $\clockN[0] + 4 (\clockT) = 4$ and $4 \not> 4$),
and cannot delay, because for any $\delta$ there is a $\delta' < \delta$ such that
$(\success,\atomA;\atomB,\clockN[0] + 4) \mid (\tsbQ,\emptyset,\clockE[0] + 4)$ 
is $\delta'-$sync. Problems like that are well-known when dealing with urgency 
\cite{BornotST97}, therefore it is usually assumed that there is a first instant in which an 
urgent action becames enabled. In our setting, this assumption corresponds to forbid guards 
in the form $\clockX > n$. We do not make this assumption here just because our result 
does not rely on it.
\end{example}

\paragraph{Synchrony vs asynchrony.}
We remark some differences between the semantics in 
\Cref{fig:tst:s_semantics,fig:tst:as_semantics}:
\begin{itemize}
\item Synchronous semantics is defined on synchronous configurations, 
namely buffers are 1-bounded;
asynchronous semantics is defined on configurations, where buffers are unbounded.
\item With synchronous semantics, by rule \nrule{[Del]} of \Cref{fig:tst:s_semantics}, 
time can pass only if all buffer are empty. With asynchronous semantics,
time may pass even if buffers are not empty. However, by rule \nrule{[S-Del]} of
\Cref{fig:tst:as_semantics}, not empty buffers still constrain time passing: 
message comsumption is never delayed.
\end{itemize}

\begin{example}
Let $\tsbP$ and $\tsbQ$ be as in \Cref{ex:as-semantics}.
Intuitively, their composition should not succeed with synchronous semantics,
as $\tsbP$ writes $\atomA$ strictly earlier then when $\tsbQ$ is going to read it.
A possible complete execution is:
\[
(\tsbP,\emptyset,\clockN[0]) \mid (\tsbQ,\emptyset,\clockE[0]) \smove{\tau}
(\tsbP,\atomA,\clockN[0]) \mid (\tsbQ,\emptyset,\clockE[0])
\]
Where $(\tsbP,\atomA,\clockN[0]) \mid (\tsbQ,\emptyset,\clockE[0])$ is
s-deadlock: it is not success, it cannot perform actions, and it cannot delay 
(one buffer is not empty).
\end{example}

Below, we sketch a proof of some relations between synchronous and asynchronous semantics.
Namely, asynchronous semantics simulates the synchronous one. Furthermore,
when queues are empty, asynchronous delays are mimicked by the synchronous semantics.
\begin{lemma}\label{lem:async-simulates-sync}
Let $(\tsbP,\mqr,\clockN) \mid (\tsbQ,\mqs,\clockE)$ be a synchronous configuration.
Then:
\[
(\tsbP,\mqr,\clockN) \mid (\tsbQ,\mqs,\clockE) \smove{\alpha} 
(\tsbPi,\mqri,\clockNi) \mid (\tsbQi,\mqsi,\clockEi) \implies
(\tsbP,\mqr,\clockN) \mid (\tsbQ,\mqs,\clockE) \asmove{\alpha} 
(\tsbPi,\mqri,\clockNi) \mid (\tsbQi,\mqsi,\clockEi)
\]
Furthermore:
\[
(\tsbP,\emptyset,\clockN) \mid (\tsbQ,\emptyset,\clockE) \asmove{\delta} 
(\tsbPi,\mqri,\clockNi) \mid (\tsbQi,\mqsi,\clockEi) \implies
(\tsbP,\emptyset,\clockN) \mid (\tsbQ,\emptyset,\clockE) \smove{\delta} 
(\tsbPi,\mqri,\clockNi) \mid (\tsbQi,\mqsi,\clockEi)
\]
\end{lemma}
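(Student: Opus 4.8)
The plan is to factor both implications through a single-endpoint simulation statement and then lift it to composite configurations by a case analysis on the last rule applied. First I would prove the auxiliary claim that, for an endpoint whose queue has length at most one, every synchronous transition $(\tsbP,\mqr,\clockN)\smove{\alpha}(\tsbPi,\mqri,\clockNi)$ is also an asynchronous transition $(\tsbP,\mqr,\clockN)\asmove{\alpha}(\tsbPi,\mqri,\clockNi)$, by induction on its derivation. The rules \nrule{[$\sumInt$]}, \nrule{[\qmark]} and \nrule{[Del]} of \Cref{fig:tst:s_semantics} all fire on an empty queue and \nrule{[\bang]} on a singleton queue, so in each case instantiating the asynchronous rule of \Cref{fig:tst:as_semantics} with $\mqr=\emptyset$ (respectively, with empty tail) reproduces exactly the synchronous conclusion, the side conditions on guards, resets and $\rdy{}$ being literally the same; \nrule{[Rec]} follows from the induction hypothesis applied to the unfolding.

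For the first implication I would then case split on the last rule of the composite synchronous derivation, i.e. one of \nrule{[S-$\sumInt$]}, \nrule{[S-$\tau$]}, \nrule{[S-Del]} (or a symmetric variant). For \nrule{[S-$\sumInt$]} and \nrule{[S-$\tau$]} the premises are single-endpoint $\tau$, $\atomOut{a}$ and $\atomIn{a}$ moves: applying the auxiliary claim to each and reassembling with the homonymous asynchronous rule yields the composite asynchronous move verbatim, since these two rules coincide in the two figures.

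The delicate case, which I expect to be the main obstacle, is \nrule{[S-Del]}, since its asynchronous form carries the extra urgency premise $\forall\delta'<\delta : (\tsbP,\mqr,\clockN)\mid(\tsbQ,\mqs,\clockE)$ not $\delta'$-sync that has no synchronous counterpart and must be discharged. Here I would exploit that a single-endpoint $\delta$-move is derivable only via \nrule{[Del]} (possibly under \nrule{[Rec]}), whose synchronous form requires an empty queue; hence the two premises of \nrule{[S-Del]} force $\mqr=\emptyset=\mqs$. I then observe that an empty-queue endpoint can never fire an output, because $\asmove{\atomOut{a}}$ is produced only by \nrule{[\bang]}, which needs $\atomA$ at the head of the queue, and delays leave queues untouched. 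Therefore neither disjunct in the definition of $\delta'$-sync can ever hold, for any $\delta'$, so the configuration is never $\delta'$-sync and the urgency premise holds trivially; applying the asynchronous \nrule{[S-Del]} then gives the required transition.

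The second implication is the same argument run backwards on a single rule and is easier. A composite asynchronous $\delta$-transition can only arise from \nrule{[S-Del]}, so its premises are single-endpoint delays $(\tsbP,\emptyset,\clockN)\asmove{\delta}(\tsbP,\emptyset,\clockNi)$ and $(\tsbQ,\emptyset,\clockE)\asmove{\delta}(\tsbQ,\emptyset,\clockEi)$; in particular $\tsbPi=\tsbP$, $\tsbQi=\tsbQ$ and $\mqri=\emptyset=\mqsi$. With empty queues the asynchronous \nrule{[Del]} premise reduces to $\clockNi=\clockN+\delta\in\rdy{\tsbP}$ (and symmetrically), which is exactly the side condition of the synchronous \nrule{[Del]}, so each endpoint also delays synchronously. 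As the synchronous \nrule{[S-Del]} has no urgency premise, I would simply discard it and recombine the two synchronous endpoint delays into the required synchronous transition to the same configuration.
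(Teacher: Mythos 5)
Your proposal is correct and follows essentially the same route as the paper's proof: your single-endpoint auxiliary claim and the inline observations that synchronous delays force empty queues and that empty-queue asynchronous delays transfer back are exactly the paper's three auxiliary facts, and your treatment of \nrule{[S-Del]} (empty queues preclude outputs, so no $\delta'$-sync ever holds and the urgency premise is vacuous) is precisely the paper's key step in both directions. No gaps; if anything, your explicit remark that delays leave queues untouched when checking $\delta'$-sync at shifted valuations is slightly more careful than the paper's phrasing.
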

\begin{proof}
First note the following facts (can be easily proved by rule induction):
\begin{equation}\label{lem:async-simulates-sync-eq1}
\forall (\tsbP,\mqr,\clockN),(\tsbPi,\mqri,\clockNi):\;\;
(\tsbP,\mqr,\clockN) \smove{\alpha} (\tsbPi,\mqri,\clockNi)\; \implies\;
(\tsbP,\mqr,\clockN) \asmove{\alpha} (\tsbPi,\mqri,\clockNi)
\end{equation}
\begin{equation}\label{lem:async-simulates-sync-eq2}
\forall (\tsbP,\mqr,\clockN):\;\;
(\tsbP,\mqr,\clockN) \smove{\delta}\; \implies
\mqr =\emptyset
\end{equation}
\begin{equation}\label{lem:async-simulates-sync-eq3}
\forall \tsbP,\clockN,\tsbPi,\clockNi:\;\;
(\tsbP,\emptyset,\clockN) \asmove{\delta} (\tsbPi,\emptyset,\clockNi)\; \implies\;
(\tsbP,\emptyset,\clockN) \smove{\delta} (\tsbPi,\emptyset,\clockNi)
\end{equation}
Back to the main statement, the first part can be proved by cases on the rule used
in the derivation of $(\tsbP,\mqr,\clockN) \mid (\tsbQ,\mqs,\clockE) \smove{\alpha} 
(\tsbPi,\mqri,\clockNi) \mid (\tsbQi,\mqsi,\clockEi)$.
We only show the more complicated case, namely rule \nrule{[S-Del]}. Suppose:
\[
\irule
    {(\tsbP,\mqr,\clockN) \smove{\; \delta \; } (\tsbP,\mqr,\clockNi) \quad 
      (\tsbQ,\mqs,\clockE) \smove{\; \delta \; } (\tsbQ,\mqs,\clockEi)}
    {(\tsbP,\mqr,\clockN) \mid (\tsbQ,\mqs,\clockE) \smove{\; \delta \;} 
      (\tsbP,\mqr,\clockNi) \mid (\tsbQ,\mqs,\clockEi)}
\]
By \Cref{lem:async-simulates-sync-eq2}: $\mqs = \emptyset = \mqr$.
Therefore, by an inspection of the rules in \Cref{fig:tst:as_semantics},
we can conclude that, for all $\delta'$ and for all $\atomA$, 
both $(\tsbP,\mqr,\clockN) \not\asmove{\; \atomOut{a} \; }$ and
$(\tsbQ,\mqs,\clockE) \not\asmove{\; \atomOut{a} \; }$. So, 
$(\tsbP,\mqr,\clockN) \mid (\tsbQ,\mqs,\clockE)$ not $\delta'-$sync
for any $\delta'$. Therefore, thanks to \Cref{lem:async-simulates-sync-eq1},
we can use rule \nrule{[S-Del]}:
\[
\irule{(\tsbP,\mqr,\clockN) \asmove{\; \delta \; } (\tsbP,\mqr,\clockNi) \quad 
      (\tsbQ,\mqs,\clockE) \asmove{\; \delta \; } (\tsbQ,\mqs,\clockEi)
	\quad \forall \delta' < \delta: (\tsbP,\mqr,\clockN) \mid (\tsbQ,\mqs,\clockE)
	\text{ not }\delta'-\text{sync}}
    {(\tsbP,\mqr,\clockN) \mid (\tsbQ,\mqs,\clockE) \asmove{\; \delta \;} 
      (\tsbP,\mqr,\clockNi) \mid (\tsbQ,\mqs,\clockEi)}
\]

For the furthermore case, the only possibility is:
\[
\irule
    {(\tsbP,\emptyset,\clockN) \asmove{\; \delta \; } (\tsbP,\emptyset,\clockNi) \quad 
      (\tsbQ,\emptyset,\clockE) \asmove{\; \delta \; } (\tsbQ,\emptyset,\clockEi)
	\quad \forall \delta' < \delta: 
(\tsbP,\emptyset,\clockN) \mid (\tsbQ,\emptyset,\clockE) \text{ not }\delta'-\text{sync}}
    {(\tsbP,\emptyset,\clockN) \mid (\tsbQ,\emptyset,\clockE) \asmove{\; \delta \;} 
      (\tsbP,\emptyset,\clockNi) \mid (\tsbQ,\emptyset,\clockEi)}
     \nrule{[S-Del]}
\]
By \Cref{lem:async-simulates-sync-eq3}:
\[
    \irule
    {(\tsbP,\emptyset,\clockN) \smove{\; \delta \; } (\tsbP,\emptyset,\clockNi) \quad 
      (\tsbQ,\emptyset,\clockE) \smove{\; \delta \; } (\tsbQ,\emptyset,\clockEi)}
    {(\tsbP,\emptyset,\clockN) \mid (\tsbQ,\emptyset,\clockE) \smove{\; \delta \;} 
      (\tsbP,\emptyset,\clockNi) \mid (\tsbQ,\emptyset,\clockEi)}
    \nrule{[S-Del]}
\]
\end{proof}

\paragraph{Asynchronous progress.} \label{sec:tst-acompliance}
We extend the notion of progress to the asynchronous setting. 
It differs from \Cref{def:compliance}
only in that it uses the asynchronous semantics.

\begin{definition}[Asynchronous compliance] 
  \label{def:adeadlock}
  \label{def:acompliance}
  We say that $(\tsbP,\mqr,\clockN) \mid (\tsbQ,\mqs,\clockE)$ is \emph{a-stuck}
  whenever $(\tsbP,\mqr,\clockN) \mid (\tsbQ,\mqs,\clockE) \not\asmove{\tau}$ and
  there is no $\delta$ such that $(\tsbP,\mqr,\clockN) \mid (\tsbQ,\mqs,\clockE) 
  \smove{\delta}\asmove{\tau}$.
  We say that $(\tsbP,\mqr,\clockN) \mid (\tsbQ,\mqs,\clockE)$
  is \emph{a-deadlock} whenever
  $(i)$ $(\tsbP,\mqr,\clockN) \mid (\tsbQ,\mqs,\clockE)$ not success,
  and $(ii)$ $(\tsbP,\mqr,\clockN) \mid (\tsbQ,\mqs,\clockE)$ is a-stuck..
  We then write $(\tsbP,\clockN) \acompliant (\tsbQ,\clockE)$
  whenever:
  \[
  (\tsbP,\emptyset,\clockN) \mid (\tsbQ,\emptyset,\clockE)
  \asmove{}^* 
  (\tsbPi,\mqr,\clockNi) \mid (\tsbQi,\mqs,\clockEi)
  \quad \text{ implies } \quad 
  (\tsbPi,\mqr,\clockNi) \mid (\tsbQi,\mqs,\clockEi)
  \text{ not a-deadlock}
  \]
  We say that $\tsbP$ and $\tsbQ$ are \emph{asynchronous compliant}
  whenever $(\tsbP,\clockN[0]) \acompliant (\tsbQ,\clockE[0])$
  (in short, $\tsbP \acompliant \tsbQ$).
\end{definition}

\section{Results} \label{sec:tst-results}
In this section we sketch a proof of the main result of the paper 
(\Cref{th:scompliant-impies-acompliant}), namely that synchronous progress implies
asynchronous progress.
The proof is quite standard: we introduce a property (being the composition of
\emph{r-compliant} TSTs, \Cref{def:coind-compliance-async})
that is enjoyed by $(\tsbP,\emptyset,\clockN[0]) \mid (\tsbQ,\emptyset,\clockE[0])$, 
provided $\tsbP \compliant \tsbQ$.
We then show that r-compliance is preserved by transitions 
(\Cref{lem:rcompliance-preservation})
and that configurations of r-compliant TSTs are not a-deadlock.
(\Cref{lem:rcompliance-deadlock-free}).

R-compliance is defined below. It is based on the notion of \emph{reminder}, that, given a 
configuration and a queue, 
returns the configuration obtained after consuming the given queue immediately (without delays).
Note that the remainder is a partial operation, and it is not defined if the queue cannot be 
consumed, or some delay is required.
Then, r-compliance requires that:
\begin{itemize}
\item Queues can be consumed immediately.
\item The resulting configuration is composed by synchronous compliant TSTs.
\end{itemize}

\begin{definition}
\label{def:coind-compliance-async}
\label{def:remainder}
 We define the remainder of $(\tsbP,\mqr,\clockN)$ and queue $\mqs$, in symbols 
$(\tsbP,\mqr,\clockN) - \mqs$, inductively as follows:
\[
\begin{array}{rcll}
(\tsbP,\mqr,\clockN) - \emptyset & = & (\tsbP,\mqr,\clockN)\\
(\tsbP,\mqr,\clockN) - \atom{a};\mqs & = & (\tsbPi,\mqri,\clockNi) - \mqs & 
\quad\text{ if }(\tsbP,\mqr,\clockN) \asmove{\atomIn{a}} (\tsbPi,\mqri,\clockNi) 
\end{array}
\]
We say that  $(\tsbP,\mqr,\clockN)$ is \emph{r-compliant} with $(\tsbQ,\mqs,\clockE)$
(in symbols $(\tsbP,\mqr,\clockN) \rcompliant (\tsbQ,\mqs,\clockE)$)
if, for some $\tsbPi,\clockNi,\tsbQi,\clockEi$:
\[
(\tsbP,\mqr,\clockN) - \mqs = (\tsbPi,\mqr,\clockNi) \;\land\;
(\tsbQ,\mqs,\clockE) - \mqr = (\tsbQi,\mqs,\clockEi) \;\land\;
(\tsbPi,\clockNi) \compliant (\tsbQi,\clockEi)
\] 
\end{definition}

The following auxiliary lemma says that the r-compliant configurations, under asynchronous 
semantics, never allow delays unless both the queues are empty.
\begin{lemma}\label{lem:asyn-compliance-delta}
If $(\tsbP,\mqr,\clockN)\; \rcompliant\; (\tsbQ,\mqs,\clockE)$ and
$(\tsbP,\mqr,\clockN) \mid (\tsbQ,\mqs,\clockE) \asmove{\delta}$, then
$\mqr = \emptyset = \mqs$.
\end{lemma}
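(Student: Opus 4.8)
The plan is to derive both emptiness claims by contradiction, pivoting on the urgency side-condition of rule \nrule{[S-Del]}. The key observation is that a delay label $\delta$ is strictly positive, so $0$ is a legitimate witness for the quantifier $\forall \delta' < \delta$ appearing in that side-condition. First I would invert the transition $(\tsbP,\mqr,\clockN) \mid (\tsbQ,\mqs,\clockE) \asmove{\delta}$: the only composite rule producing a delay-labelled transition is \nrule{[S-Del]}, whose third premise then yields that $(\tsbP,\mqr,\clockN) \mid (\tsbQ,\mqs,\clockE)$ is \emph{not} $0$-sync (instantiating $\delta' = 0 < \delta$, and using $\clockN + 0 = \clockN$, $\clockE + 0 = \clockE$).

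Next I would assume $\mqr \neq \emptyset$, say $\mqr = \atomA;\mqri$, and contradict ``not $0$-sync''. Since rule \nrule{[\bang]} carries no guard, the head of $\mqr$ can be emitted with no delay, giving $(\tsbP,\mqr,\clockN) \asmove{\atomOut{a}}$. On the receiver side, r-compliance supplies that the remainder $(\tsbQ,\mqs,\clockE) - \mqr$ is defined; unfolding its first step against the leading symbol $\atomA$ of $\mqr$ forces $(\tsbQ,\mqs,\clockE) \asmove{\atomIn{a}}$. Together these are precisely the first disjunct of the $\delta$-sync predicate at $\delta = 0$, so the composite \emph{is} $0$-sync --- a contradiction, whence $\mqr = \emptyset$. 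The case $\mqs \neq \emptyset$ is symmetric: writing $\mqs = \atomB;\mqsi$, rule \nrule{[\bang]} gives $(\tsbQ,\mqs,\clockE) \asmove{\atomOut{b}}$, while definedness of the other remainder $(\tsbP,\mqr,\clockN) - \mqs$ gives $(\tsbP,\mqr,\clockN) \asmove{\atomIn{b}}$, realising the second disjunct of $0$-sync and forcing $\mqs = \emptyset$.

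The one genuinely load-bearing point --- and the only place where both hypotheses interact --- is that the head messages can be communicated \emph{at the current instant}, i.e. with $\delta' = 0$ rather than at some strictly positive time. The sender side is immediate because \nrule{[\bang]} is unguarded; the receiver side is immediate precisely because r-compliance asserts that each remainder is defined, and by construction the remainder consumes its queue through single-party input transitions (rule \nrule{[\qmark]}) whose guards hold at the present valuation. I would therefore take care to read ``the first step of the remainder is defined'' as ``a guard-enabled input is available now'', since this is exactly what licenses instantiating the $\delta$-sync predicate at $0$ and closing the contradiction.
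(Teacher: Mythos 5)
Your proof is correct and follows essentially the same route as the paper's: invert \nrule{[S-Del]}, instantiate its side-condition at $\delta' = 0$ (legitimate since delay labels satisfy $\delta > 0$), and contradict ``not $0$-sync'' by combining the unguarded queue-head output with the guard-enabled input forced by definedness of the remainder. If anything, your citation of rule \nrule{[\bang]} for the sender's transition is more accurate than the paper's own text, which attributes it to \nrule{[$\sumInt$]}.
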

\begin{proof}
Suppose $(\tsbP,\mqr,\clockN) \mid (\tsbQ,\mqs,\clockE) \asmove{\delta}$.
First note that the only appliable rule is $\nrule{[S-Del]}$.
We have to show $\mqr = \emptyset = \mqs$. Suppose, by contradiction, this is not the case,
and assume that, say, $\mqr = \atomA;\mqrii$ for some $\atomA,\mqrii$.
By rule \nrule{[$\sumInt$]}, $(\tsbP,\mqr,\clockN) \asmove{\atomOut{a}}$.
Since $(\tsbP,\mqr,\clockN) \rcompliant (\tsbQ,\mqs,\clockE)$, 
it follows that $(\tsbQ,\mqs,\clockE) - \atomA;\mqrii$ is defined.
Then, by \Cref{def:remainder}, it must be $(\tsbQ,\mqs,\clockE) \asmove{\atomIn{a}}$.
But then $(\tsbP,\mqr,\clockN) \mid (\tsbQ,\mqs,\clockE)$ is $0$-sync, and so
rule \nrule{[S-Del]} does not apply: contradiction.
\end{proof}

The following proposition states that r-compliance is preserved by asynchronous transitions. 
\begin{proposition}\label{lem:rcompliance-preservation}
Let $(\tsbP,\mqr,\clockN)\; \rcompliant\; (\tsbQ,\mqs,\clockE)$ and
$(\tsbP,\mqr,\clockN) \mid (\tsbQ,\mqs,\clockE) \asmove{\alpha} 
(\tsbPi,\mqri,\clockNi) \mid (\tsbQi,\mqsi,\clockEi)$. Then:
\[(\tsbPi,\mqri,\clockNi)\; \rcompliant\; (\tsbQi,\mqsi,\clockEi)\]
\end{proposition}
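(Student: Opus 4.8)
The plan is to proceed by cases on the last rule used for the composite step $(\tsbP,\mqr,\clockN) \mid (\tsbQ,\mqs,\clockE) \asmove{\alpha} (\tsbPi,\mqri,\clockNi) \mid (\tsbQi,\mqsi,\clockEi)$, which must be one of \nrule{[S-$\sumInt$]}, \nrule{[S-$\tau$]} or \nrule{[S-Del]} (their symmetric variants are treated identically, and any leading \nrule{[Rec]} is dealt with by unfolding, exactly as the remainder and $\rdy{}$ do). In every case I start from the three witnesses supplied by $(\tsbP,\mqr,\clockN) \rcompliant (\tsbQ,\mqs,\clockE)$ via \Cref{def:remainder}: the drained endpoints $(\tsbP,\mqr,\clockN) - \mqs = (\tsbPii,\mqr,\clockNii)$ and $(\tsbQ,\mqs,\clockE) - \mqr = (\tsbQii,\mqs,\clockEii)$, together with $(\tsbPii,\clockNii) \compliant (\tsbQii,\clockEii)$, and I rebuild the analogous witnesses for the target configuration.

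For \nrule{[S-Del]} I would first invoke \Cref{lem:asyn-compliance-delta} to get $\mqr = \emptyset = \mqs$, so the drained pair is just $(\tsbP,\clockN) \compliant (\tsbQ,\clockE)$; the ``furthermore'' part of \Cref{lem:async-simulates-sync} reproduces the delay synchronously, and since $\compliant$ is a reachability predicate closed under $\smove{}$, I obtain $(\tsbP,\clockN+\delta) \compliant (\tsbQ,\clockE+\delta)$, which (buffers still empty) is exactly r-compliance of the target. The \nrule{[S-$\tau$]} case is a telescoping of the remainder: the sender fires \nrule{[\bang]} (so $\mqr = \atomA;\mqri$, with its state and clock unchanged) and the receiver fires \nrule{[\qmark]}, moving $\tsbQ$ to $\tsbQi$ by consuming exactly the head $\atomA$. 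Since input steps never inspect the output buffer, draining $\mqs$ out of $\tsbP$ is insensitive to dropping $\atomA$, so $(\tsbP,\mqri,\clockN) - \mqs$ still reaches $(\tsbPii,\clockNii)$; and because external-choice labels are pairwise distinct the input is deterministic, giving $(\tsbQi,\mqs,\clockEi) - \mqri = (\tsbQ,\mqs,\clockE) - (\atomA;\mqri) = (\tsbQii,\mqs,\clockEii)$. The very same compliant pair $(\tsbPii,\clockNii) \compliant (\tsbQii,\clockEii)$ thus certifies the target, with no further use of compliance.

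The crux is \nrule{[S-$\sumInt$]}, where $\tsbP$ commits an internal choice, enqueues $\atomA$ and steps to a continuation $\tsbPi$ with $\clockNi = \reset{\clockN}{\resetR}$, so the target is $(\tsbPi,\mqr;\atomA,\clockNi) \mid (\tsbQ,\mqs,\clockE)$. Since an internal choice has no input transition, the remainder $(\tsbP,\mqr,\clockN) - \mqs$ can be defined only when $\mqs = \emptyset$; hence $\tsbPii = \tsbP$, $\clockNii = \clockN$ and $(\tsbP,\clockN) \compliant (\tsbQii,\clockEii)$ with $(\tsbQii,\clockEii) = (\tsbQ,\emptyset,\clockE) - \mqr$. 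I would then replay the commit synchronously from $(\tsbP,\emptyset,\clockN) \mid (\tsbQii,\emptyset,\clockEii)$ (the guard for $\atomA$ holds at $\clockN$), reaching the committed configuration $C = (\tsbPi,\atomA,\clockNi) \mid (\tsbQii,\emptyset,\clockEii)$. As $C$ is reachable from a compliant pair it is not s-deadlock; it is not success (non-empty left buffer); and the committed left endpoint can only fire \nrule{[\bang]}, so $C$ cannot delay. Hence $C$ must take a $\tau$ step. The decisive sub-argument is that $\tsbQii$ must be an external choice offering $\atomIn{a}$ now at $\clockEii$: were it $\success$ or an internal choice, the only moves from $C$ would be either absent or a commit of $\tsbQii$ to its own output, reaching a configuration in which both endpoints are committed to disjoint outputs over non-empty buffers --- s-stuck and not success, i.e.\ an s-deadlock, contradicting $(\tsbP,\clockN) \compliant (\tsbQii,\clockEii)$. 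Therefore the $\tau$ from $C$ is a \nrule{[S-$\tau$]} synchronisation on $\atomA$, giving $(\tsbQii,\emptyset,\clockEii) \smove{\atomIn{a}} (\tsbQiii,\emptyset,\clockEiii)$, which lifts to $\asmove{\atomIn{a}}$ by \eqref{lem:async-simulates-sync-eq1} and yields $(\tsbQ,\emptyset,\clockE) - (\mqr;\atomA) = (\tsbQiii,\emptyset,\clockEiii)$; and the reached empty-buffer configuration $(\tsbPi,\emptyset,\clockNi) \mid (\tsbQiii,\emptyset,\clockEiii)$, being reachable from the compliant pair, gives $(\tsbPi,\clockNi) \compliant (\tsbQiii,\clockEiii)$. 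These are precisely the three witnesses for $(\tsbPi,\mqr;\atomA,\clockNi) \rcompliant (\tsbQ,\emptyset,\clockE)$.

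I expect this last case to be the main obstacle: it is where urgency really bites, since the whole argument rests on the fact that a freshly enqueued message must be receivable \emph{immediately} by the synchronous partner (no intervening delay), and on upgrading the bare non-deadlock guarantee of $\compliant$ into the concrete facts that the partner is an external choice ready for $\atomA$ and that compliance is inherited by the post-synchronisation pair. The remaining cases, the symmetric rules, and the recursion unfolding are routine.
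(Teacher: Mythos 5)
Your proof is correct and follows the paper's decomposition exactly in two of the three cases: you handle \nrule{[S-Del]} via \Cref{lem:asyn-compliance-delta}, the ``furthermore'' half of \Cref{lem:async-simulates-sync}, and closure of $\compliant$ under synchronous reachability, and \nrule{[S-$\tau$]} via the same telescoping of the remainder as the paper (your explicit appeal to determinism of inputs, from pairwise-distinct labels in external choices, is left implicit there). The genuine divergence is in the \nrule{[S-$\sumInt$]} case: at the point where one must show that the drained partner $(\tsbQii,\emptyset,\clockEii)$ can consume the freshly enqueued $\atomA$ \emph{immediately}, with compliance passing to the continuations, the paper simply cites lemmas A.2 and 3.6 of \cite{BartolettiCM17} ``modulo minor notational differences,'' whereas you reprove this inline: you replay the commit synchronously, observe that the committed configuration $C$ can neither delay (the non-empty queue blocks \nrule{[Del]}, hence \nrule{[S-Del]}) nor be success, so by compliance it must fire a $\tau$; you then rule out every $\tau$ other than an \nrule{[S-$\tau$]} synchronisation on $\atomA$ --- in particular a commit by $\tsbQii$ would yield a two-sided committed configuration that is s-stuck and not success, an s-deadlock reachable from a compliant pair --- and you recover $(\tsbPi,\clockNi) \compliant (\tsbQiii,\clockEiii)$ from closure of $\compliant$ under reachability. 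Your route buys self-containedness, needing nothing beyond \Cref{def:compliance} and the rules of \Cref{fig:tst:s_semantics}, at the cost of a longer case analysis; the paper's citation is shorter but leans on external results whose statements the reader must reconstruct. One presentational nit: your shape analysis of $\tsbQii$ does not explicitly name the case of an external choice with no input enabled at $\clockEii$, but this is harmless, since that case is already covered by your observation that $C$ would then admit no $\tau$ at all and hence be an s-deadlock, contradicting compliance.
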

\begin{proof}
Since $(\tsbP,\mqr,\clockN) \rcompliant (\tsbQ,\mqs,\clockE)$,
there exist $\tsbPii,\clockNii,\tsbQii,\clockEii$ such that:
\[
(\tsbP,\mqr,\clockN) - \mqs = (\tsbPii,\mqr,\clockNii) \;\land\;
(\tsbQ,\mqs,\clockE) - \mqr = (\tsbQii,\mqs,\clockEii) \;\land\;
(\tsbPii,\clockNii) \compliant (\tsbQii,\clockEii)
\]  
We proceed by cases on the rule used. 
\begin{itemize}
\item \nrule{[S-$\sumInt$]}. It must be $(\tsbQ,\mqs,\clockE) = (\tsbQi,\mqsi,\clockEi)$,
$\mqri = \mqr;\atomA$ for some $\atomA$, and $(\tsbP,\mqr,\clockN) \asmove{\; \tau \; }
(\tsbPi,\mqri,\clockNi)$.  By an inspection of rules in \Cref{fig:tst:as_semantics}, 
we can conclude
that $(\tsbP,\mqr,\clockN) \not\asmove{\; \atomIn{b} \; }$ for all $\atomB$.
Then $\mqs = \emptyset$: otherwise, $(\tsbP,\mqr,\clockN) - \mqs$ would be undefined.
Therefore, $(\tsbP,\mqr,\clockN) = (\tsbPii,\mqr,\clockNii)$ and
$(\tsbPi,\mqri,\clockNi) - \mqs = (\tsbPi,\mqri,\clockNi)$. By a simple induction
on the length of $\mqr$, we can conclude $(\tsbQ,\mqs,\clockE) - \mqr;\atomA =
(\tsbQii,\mqs,\clockEii) - \atomA$. We have to show that 
$(\tsbQii,\mqs,\clockEii) - \atomA = (\tsbQiii,\mqs,\clockEiii)$ for some 
$\tsbQiii,\clockEiii$ such that $(\tsbPi,\clockNi) \compliant (\tsbQiii,\clockEiii)$.
Note that, since $(\tsbPii,\clockNii) \compliant (\tsbQii,\clockEii)$ and 
$(\tsbPii,\clockNii)$ writes $\atomA$, it must be (by lemmas A.2 and 3.6 of 
\cite{BartolettiCM17}, modulo minor notational differences)
$(\tsbQii,\mqs,\clockEii) \smove{\;\atomIn{a}\;} (\tsbQiii,\mqs,\clockEiii)$, 
with $(\tsbPi,\clockNi) \compliant (\tsbQiii,\clockEiii)$.
By \Cref{lem:async-simulates-sync-eq1} in the proof of \Cref{lem:async-simulates-sync}, 
$(\tsbQii,\mqs,\clockEii) \asmove{\;\atomIn{a}\;} (\tsbQiii,\mqs,\clockEiii)$.
Therefore, $(\tsbQii,\mqs,\clockEii) - \atomA = (\tsbQiii,\mqs,\clockEiii)$ with
$(\tsbPii,\clockNii) \compliant (\tsbQii,\clockEii)$, and we are done.
\item \nrule{[S-Del]}. By \Cref{lem:asyn-compliance-delta}, it follows
$\mqr = \emptyset = \mqs$, and by rule \nrule{[S-Del]} $\mqri = \emptyset = \mqsi$ as well. 
By \Cref{def:coind-compliance-async}, 
$(\tsbP,\clockN) \compliant (\tsbQ,\clockE)$. By \Cref{lem:async-simulates-sync} and 
\Cref{def:compliance},  it follows $(\tsbPi,\clockNi) \compliant (\tsbQi,\clockEi)$, and so, 
since $\mqri$ and $\mqsi$ are both empty,
$(\tsbPi,\mqri,\clockNi) \rcompliant (\tsbQi,\mqsi,\clockEi)$.
\item \nrule{[S-$\tau$]}. It must be $(\tsbP,\mqr,\clockN) \asmove{\; \atomOut{a} \; } 
(\tsbPi,\mqri,\clockNi)$ and
$(\tsbQ,\mqs,\clockE) \asmove{\; \atomIn{a} \; } (\tsbQi,\mqsi,\clockEi)$.
By a simple induction on the rules in \Cref{fig:tst:as_semantics}, we can
conclude $\mqr = \atomA;\mqri$, $\mqs = \mqsi$, $\tsbP = \tsbPi$ 
(up to unfolding of recursion), and $\clockN = \clockNi$. 
Since $(\tsbQ,\mqs,\clockE) \asmove{\; \atomIn{a} \; } (\tsbQi,\mqsi,\clockEi)$,
by \Cref{def:remainder} it follows that
$(\tsbQ,\mqs,\clockE) - \atom{a};\mqri = (\tsbQi,\mqs,\clockEi) - \mqri$.
Clearly, up to unfolding of recursion, 
$(\tsbP,\mqr,\clockN) - \mqs = (\tsbPi,\mqr,\clockN) - \mqsi$.
Therefore, since $(\tsbP,\mqr,\clockN) \rcompliant (\tsbQ,\mqs,\clockE)$ by assumption,
also $(\tsbPi,\mqri,\clockNi) \rcompliant (\tsbQi,\mqsi,\clockEi)$.
\end{itemize}
\end{proof}

The following proposition states a-deadlock freedom of r-compliant configurations.
\begin{proposition}\label{lem:rcompliance-deadlock-free}
If $(\tsbP,\mqr,\clockN)\; \rcompliant\; (\tsbQ,\mqs,\clockE)$, then
$(\tsbP,\mqr,\clockN) \mid (\tsbQ,\mqs,\clockE)$ is not a-deadlock.
\end{proposition}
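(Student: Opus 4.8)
The plan is to show that an r-compliant configuration is never a-deadlock by exhibiting, from any such configuration, either a success state or an enabled action (possibly after a delay). I would split the argument according to whether the queues are empty or not.

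First I would handle the case where at least one queue is non-empty, say $\mqr = \atomA;\mqrii$. Since $(\tsbP,\mqr,\clockN) \rcompliant (\tsbQ,\mqs,\clockE)$, \Cref{def:remainder} guarantees that $(\tsbQ,\mqs,\clockE) - \mqr$ is defined, so in particular $(\tsbQ,\mqs,\clockE) \asmove{\atomIn{a}}$, i.e. $\tsbQ$ can immediately read the head of $\mqr$. By rule \nrule{[$\sumInt$]}, $(\tsbP,\mqr,\clockN) \asmove{\atomOut{a}}$ as well, so rule \nrule{[S-$\tau$]} fires a $\tau$-transition. Hence the configuration performs $\asmove{\tau}$ and is not a-stuck; a fortiori it is not a-deadlock. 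The symmetric subcase $\mqs \neq \emptyset$ is identical by the omitted symmetric rules.

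Next I would treat the case $\mqr = \emptyset = \mqs$. Here r-compliance reduces (by \Cref{def:coind-compliance-async}, since the reminders are trivial) to $(\tsbP,\clockN) \compliant (\tsbQ,\clockE)$, i.e. the synchronous configuration $(\tsbP,\emptyset,\clockN) \mid (\tsbQ,\emptyset,\clockE)$ is reachable-from-itself and therefore not s-deadlock. Thus it is either success or not s-stuck. If it is success, then $\tsbP = \success = \tsbQ$, and the asynchronous configuration with empty queues is success too, hence not a-deadlock. If it is not s-stuck, then either $(\tsbP,\emptyset,\clockN) \mid (\tsbQ,\emptyset,\clockE) \smove{\tau}$, or there is a $\delta$ with $(\tsbP,\emptyset,\clockN) \mid (\tsbQ,\emptyset,\clockE) \smove{\delta}\smove{\tau}$. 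In the first case \Cref{lem:async-simulates-sync} (first part) gives $\asmove{\tau}$ directly. In the second case I would use \Cref{lem:async-simulates-sync} (first part) to transport the final $\smove{\tau}$ to $\asmove{\tau}$, so that it remains to produce the intermediate delay $\asmove{\delta}$ asynchronously.

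The main obstacle is exactly this last point: the synchronous delay $\smove{\delta}$ need not be mimicked step-for-step by $\asmove{\delta}$, because the urgent rule \nrule{[S-Del]} forbids delays past the first $\delta'$-sync instant. So rather than replaying the full $\delta$, I would argue that the urgent semantics lets time advance to the \emph{earliest} instant $\delta_0 \le \delta$ at which a synchronisation becomes available, and that at $\delta_0$ a $\tau$ is enabled; concretely, by \Cref{lem:async-simulates-sync-eq3} each of $\tsbP$ and $\tsbQ$ can delay asynchronously (their individual delays coincide with the synchronous ones on empty queues), and either no $\delta'$ is $\delta$-sync — in which case the full $\asmove{\delta}$ fires and then $\asmove{\tau}$ — or there is a least such $\delta_0$, at which point \nrule{[S-Del]} permits $\asmove{\delta_0}$ and the $\delta_0$-sync witness yields an immediate $\asmove{\tau}$ by \nrule{[S-$\tau$]}. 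Either way $(\tsbP,\emptyset,\clockN) \mid (\tsbQ,\emptyset,\clockE) \asmove{\delta''}\asmove{\tau}$ for some $\delta''$, so the configuration is not a-stuck, completing the proof that r-compliant configurations are not a-deadlock.
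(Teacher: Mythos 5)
Your case split and your treatment of the non-empty-queue case coincide with the paper's proof: definedness of the remainder $(\tsbQ,\mqs,\clockE) - \mqr$ yields $(\tsbQ,\mqs,\clockE) \asmove{\atomIn{a}}$, the head of the queue gives the matching output, and \nrule{[S-$\tau$]} fires. The empty-queue case also follows the paper up to the delay step, but there your argument has a genuine gap: the dichotomy ``either no $\delta' < \delta$ is sync, or there is a \emph{least} sync instant $\delta_0$'' is not exhaustive. The set of sync instants may have an unattained infimum --- exactly what happens with strict guards such as $\clockT > 4$ in \Cref{ex:as-semantics}, and the paper explicitly declines to assume a first enabling instant. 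In that third case your method breaks down: for every delay $\delta''$ strictly past the infimum some $\delta' < \delta''$ is sync, so \nrule{[S-Del]} blocks, while no delay lands on an instant at which a $\tau$ is enabled; this is precisely the stuck behaviour the example exhibits. So as written, the earliest-instant analysis does not cover all cases.

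The gap is easily closed, and the point you missed is what makes the paper's proof short: with $\mqr = \emptyset = \mqs$, output transitions $\asmove{\atomOut{a}}$ are impossible (rule \nrule{[\bang]} consumes the head of a queue, and committing an internal choice is a $\tau$, not an output), so the configuration is not $\delta'$-sync for \emph{any} $\delta'$; the urgency premise of \nrule{[S-Del]} is vacuously satisfied and the synchronous delay transports verbatim to $\asmove{\delta}$. This is exactly what the first clause of \Cref{lem:async-simulates-sync} already establishes --- its label $\alpha$ ranges over delays too, and the \nrule{[S-Del]} case of its proof is precisely this observation --- so the paper simply cites that lemma to turn $\smove{\delta}\smove{\tau}$ into $\asmove{\delta}\asmove{\tau}$, with no need for any least-instant reasoning. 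Two smaller slips: the fact you need for the individual delays is \Cref{lem:async-simulates-sync-eq1} (sync implies async), not \Cref{lem:async-simulates-sync-eq3}, which goes the other way; and the output enabled by a non-empty queue comes from rule \nrule{[\bang]}, not \nrule{[$\sumInt$]} (though the paper's own proof of \Cref{lem:asyn-compliance-delta} makes the same misattribution).
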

\begin{proof}
We have two cases:
\begin{itemize}
\item $\mqr = \emptyset \land \mqs = \emptyset$.
Then, $(\tsbP,\mqr,\clockN) - \mqs = (\tsbP,\mqr,\clockN)$ and
$(\tsbQ,\mqs,\clockE) - \mqr = (\tsbQ,\mqs,\clockE)$, with
$(\tsbP,\clockN) \compliant (\tsbQ,\clockE)$.
Therefore, $(\tsbP,\mqr,\clockN) \mid (\tsbQ,\mqs,\clockE)$ is not s-deadlock.
By \Cref{def:deadlock}, if $\tsbP = \success = \tsbQ$, then 
$(\tsbP,\mqr,\clockN) \mid (\tsbQ,\mqs,\clockE)$ is success and therefore not a-deadlock. 
If it is not the case that $\tsbP = \success = \tsbQ$, by \Cref{def:deadlock} 
there is $\delta$ such that 
$(\tsbP,\mqr,\clockN) \mid (\tsbQ,\mqs,\clockE) \smove{\delta}\smove{\tau}$ 
(we omit the simpler case where the $\tau$ move is performed without delay).
Then, by \Cref{lem:async-simulates-sync}, 
$(\tsbP,\mqr,\clockN) \mid (\tsbQ,\mqs,\clockE) \asmove{\delta}\asmove{\tau}$.
Therefore $(\tsbP,\mqr,\clockN) \mid (\tsbQ,\mqs,\clockE)$ is not a-deadlock.
\item $\mqr \neq \emptyset \lor \mqs \neq \emptyset$.
We show only the case $\mqr \neq \emptyset$. The other is similar.
It must be $\mqr = \atomA;\mqri$ for some $\atomA$ and $\mqri$.
Since $(\tsbQ,\mqs,\clockE) - \mqr$ is defined, it must be
$(\tsbQ,\mqs,\clockE) \asmove{\atomIn{a}}$. Therefore, by rule \nrule{[S-$\tau$]},
$(\tsbP,\mqr,\clockN) \mid (\tsbQ,\mqs,\clockE) \asmove{\tau}$, and so
$(\tsbP,\mqr,\clockN) \mid (\tsbQ,\mqs,\clockE)$ is not a-deadlock.
\end{itemize}
\end{proof}

The main result follows.
\begin{theorem}\label{th:scompliant-impies-acompliant}
If $\tsbP \compliant \tsbQ$ then $\tsbP \acompliant \tsbQ$.
\end{theorem}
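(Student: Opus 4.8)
The plan is to establish r-compliance (\Cref{def:coind-compliance-async}) as an invariant of the asynchronous transition system and then read off asynchronous compliance from it. Unfolding \Cref{def:acompliance}, it suffices to show that every configuration reachable by $\asmove{}^*$ from $(\tsbP,\emptyset,\clockN[0]) \mid (\tsbQ,\emptyset,\clockE[0])$ is not a-deadlock. The two propositions already proved---preservation of r-compliance (\Cref{lem:rcompliance-preservation}) and a-deadlock freedom of r-compliant configurations (\Cref{lem:rcompliance-deadlock-free})---are precisely the inductive step and the payoff of such an invariant argument, so the theorem reduces to checking the base case and assembling these pieces.

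First I would verify the base case: that $(\tsbP,\emptyset,\clockN[0]) \rcompliant (\tsbQ,\emptyset,\clockE[0])$ whenever $\tsbP \compliant \tsbQ$. This is immediate from the definitions. Since both queues are empty, the remainder operation of \Cref{def:remainder} acts as the identity, giving $(\tsbP,\emptyset,\clockN[0]) - \emptyset = (\tsbP,\emptyset,\clockN[0])$ and $(\tsbQ,\emptyset,\clockE[0]) - \emptyset = (\tsbQ,\emptyset,\clockE[0])$; moreover $(\tsbP,\clockN[0]) \compliant (\tsbQ,\clockE[0])$ holds by the hypothesis $\tsbP \compliant \tsbQ$. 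Hence the defining conditions of \Cref{def:coind-compliance-async} are met.

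For the main argument I would proceed by induction on the length of an arbitrary run
\[
(\tsbP,\emptyset,\clockN[0]) \mid (\tsbQ,\emptyset,\clockE[0]) \asmove{}^* (\tsbPi,\mqr,\clockNi) \mid (\tsbQi,\mqs,\clockEi),
\]
showing that the reached configuration is composed of r-compliant TSTs, i.e. $(\tsbPi,\mqr,\clockNi) \rcompliant (\tsbQi,\mqs,\clockEi)$. The base case is the paragraph above, and each single step preserves r-compliance by \Cref{lem:rcompliance-preservation}. Consequently every reachable configuration is r-compliant, and by \Cref{lem:rcompliance-deadlock-free} it is therefore not a-deadlock. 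This yields $(\tsbP,\clockN[0]) \acompliant (\tsbQ,\clockE[0])$, that is, $\tsbP \acompliant \tsbQ$.

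I expect no genuine obstacle in this statement itself: the mathematical content lives entirely in the two propositions (and in \Cref{lem:asyn-compliance-delta}, on which preservation relies), while the theorem is a routine unwinding of an invariant. The only points deserving care are ensuring the induction is carried out over the reachability relation $\asmove{}^*$ and that the initial configuration genuinely satisfies r-compliance---both handled above.
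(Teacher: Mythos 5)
Your proposal is correct and follows essentially the same route as the paper's own proof: establish $(\tsbP,\emptyset,\clockN[0]) \rcompliant (\tsbQ,\emptyset,\clockE[0])$ as the base case, propagate r-compliance along the run by \Cref{lem:rcompliance-preservation} via induction on its length, and conclude a-deadlock freedom of every reachable configuration from \Cref{lem:rcompliance-deadlock-free}. Your write-up is in fact slightly more explicit than the paper's, which leaves the base case (the remainder acting as the identity on empty queues) implicit.
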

\begin{proof}
Let $\tsbP \compliant \tsbQ$, and assume 
$(\tsbP,\emptyset,\clockN_0) \mid (\tsbQ,\emptyset,\clockE_0) \asmove{}^*
(\tsbPi,\mqri,\clockNi) \mid (\tsbQi,\mqri,\clockEi)$. We have to show
$(\tsbPi,\mqri,\clockNi) \mid (\tsbQi,\mqri,\clockEi)$ is not a-deadlock.
First note that, since queues are empty and $\tsbP \compliant \tsbQ$,
it holds that $(\tsbP,\emptyset,\clockN_0) \rcompliant (\tsbQ,\emptyset,\clockE_0)$.
By \Cref{lem:rcompliance-preservation} and a simple induction 
on the length of the reduction, we can derive
$(\tsbPi,\mqri,\clockNi) \rcompliant (\tsbQi,\mqri,\clockEi)$. 
By \Cref{lem:rcompliance-deadlock-free},
$(\tsbPi,\mqri,\clockNi) \mid (\tsbQi,\mqri,\clockEi)$ is not a-deadlock.
\end{proof}

\section{Conclusions and related work} \label{sec:related-work}
Following \cite{BartolettiCM17}, we pursued a line of research
aimed at lifting key properties of session types to the timed setting.
We have shown that the interesting property of 
preservation of untimed synchronous progress when passing to asynchronous
semantics (discovered in \cite{BSZ14concur}), can be recovered using a 
certain urgent asynchronous semantics, that closely models 
realistic programming primitives.

Timed session types have been introduced in \cite{Bocchi14concur}, 
in the multiparty asynchronous version,
where they have been used to statically type check a timed $\pi$-calculus. Their
theory has subsequently been extended to dynamic verification \cite{NBY17}.
\cite{BartolettiCM17} introduced the binary and synchronous theory, subsequently 
applied in a contract-oriented middleware \cite{CO2} and the 
companion verification tool-chain \cite{AtzeiB16}.
\cite{Bocchi15concur} studies progress in the context of Communicating Timed Automata 
\cite{KrcalY06}.
Several works study urgency in timed systems, here we mention \cite{BornotST97}.
As far as we know, urgency in the context of asynchronous communication
has only been studied in \cite{ctaRefinement}, 
where the idea of modelling input primitives
with input urgency originates. 

\newpage

\bibliographystyle{eptcs}
\bibliography{main}


\end{document}